\begin{document}

\title{{\Large{Laplacian Spectral Properties of Graphs from Random Local
Samples}}}

\author{Zhengwei Wu\\
 \and Victor M. Preciado}
\maketitle
\begin{abstract}
\baselineskip=9pt The Laplacian eigenvalues of a network play an
important role in the analysis of many structural and dynamical network
problems. In this paper, we study the relationship between the eigenvalue spectrum
of the normalized Laplacian matrix and the structure of `local' subgraphs of the network.
We call a subgraph \emph{local} when it is induced by the set of nodes obtained from a breath-first search (BFS) of
radius $r$ around a node.
In this paper, we propose techniques to
estimate spectral properties of the normalized Laplacian matrix from a random collection
of induced local subgraphs. In particular, we provide
an algorithm to estimate the spectral moments of the normalized Laplacian matrix (the
power-sums of its eigenvalues). Moreover, we propose a technique,
based on convex optimization, to compute upper and lower bounds
on the spectral radius of the normalized Laplacian matrix from local subgraphs.
We illustrate our results studying the normalized Laplacian spectrum of a large-scale
online social network.
\end{abstract}

\section{Introduction}

Understanding the relationship between the structure of a network
and its eigenvalues is of great relevance in the field of network
science (see \cite{chung1997spectral}, \cite{olfati2006flocking}
and references therein). The growing availability of massive
databases, computing facilities, and reliable data analysis tools
has provided a powerful framework to explore this relationship for
many real-world networks. On the other hand, in many cases of practical
interest, one cannot efficiently retrieve and/or store the exact full
topology of a large-scale network. Alternatively, it is usually easy
to retrieve local samples of the network structure. In this paper,
we focus our attention on local sample of the network structure given in the form of a subgraph induced by the set of nodes obtained from
a breath-first search (BFS) of small radius $r$ around a particular
node.

We study the relationship between the normalized Laplacian spectrum of a graph
and a random collection of local subgraphs. We show how local structural information contained in these
localized subgraphs can be efficiently aggregated to infer global
properties of the normalized Laplacian spectrum. Our analysis reveals that certain
spectral properties, such as the so-called spectral moments (the power-sums
of the eigenvalues), can be efficiently estimated from a random collection
of localized subgraphs. Furthermore, applying recent results connecting
the classical moment problem and convex optimization, we propose a
series of semidefinite programs (SDP) to compute upper and lower bounds
on the Laplacian spectral radius from a collection of local structural
samples.

\subsection{Previous Work}

Studying the relationship between the structure of a graph and its
eigenvalues is the central topic in the field of algebraic graph theory
\cite{biggs1993algebraic},\cite{chung1997spectral},\cite{cvetkovic2010introduction},\cite{mohar1991laplacian}.
In particular, the spectrum of the Laplacian matrix has a direct connection to the behavior of several
networked dynamical processes, such as random walks \cite{lovasz1993random},
consensus dynamics \cite{olfati2006flocking}, and a wide variety
of distributed algorithms \cite{lynch1996distributed}.

In many cases of practical interest it is unfeasible to exactly retrieve
the complete structure of a network of contacts, making it impossible to compute the graph spectrum directly. However, in most cases one can easily retrieve local subgraphs obtained via BFS of small radius. To estimate
spectral properties from localized structural samples, researchers
have proposed a variety of random network models in which they can
prescribe features retrieved from these samples, such as the degree
distribution \cite{chung2003spectra},\cite{newman2001random}, local correlations \cite{newman2002assortative},\cite{pastor2001dynamical},
or clustering \cite{newman2009random}.

Although random networks are the primary tool to study the impact
of local structural features on spectral network properties \cite{chung2003spectra}, this approach presents a major flaw: Random network models
implicitly induce structural properties that are not directly
controlled in the model construction, but can have a strong influence on the eigenvalue spectrum.
For example, it is possible to find two networks having the same degree
distribution, but with very different eigenvalue spectra.

\subsection{Our contribution}

In this paper, we develop a mathematical framework, based on algebraic
graph theory and convex optimization, to study how localized samples
of the network structure can be used to compute spectral properties
of the normalized Laplacian matrix of (possibly weighted) graphs. The following are our main contributions:
\begin{itemize}
\item We develop a \emph{sublinear} time algorithm to estimate the spectral moments
(power-sums of the eigenvalues) of the normalized Laplacian matrix of a graph
from a random set of local subgraphs samples. In our analysis,
we use Hoeffding inequality to provide quality guarantees of our estimators as a function of the number of samples.
\item We provide a convex program to compute, in linear time\footnote{Our algorithm runs in linear time assuming that the size of the local subgraphs are much smaller than $n$.},
upper and lower bounds on the Laplacian spectral radius from a random
set of local subgraph samples. Our results are based on recent
results connecting the classical moment problem with semidefinite
programs (SDP).
\end{itemize}

\section{Problem Formulation.}

\subsection{Notation and Preliminaries.}

Let $\mathcal{G}:=(\mathcal{V},\mathcal{E})$ be an undirected unweighted graph
(or network), where $\mathcal{V}:=\{v_{1},v_{2},...v_{n}\}$ represents
the set of nodes and $\mathcal{E}\subseteq\mathcal{V}\times\mathcal{V}$
represents the set of edges%
\footnote{We consider only graphs with no self-loops (i.e., edges of the type $\left\{ v_{i},v_{i}\right\} $).%
}. The \emph{neighborhood} of $v_{i}\in\mathcal{V}$ is defined as
$\mathcal{N}_{i}:=\{v_{j}\in\mathcal{V}:\{v_{i},v_{j}\}\in\mathcal{E}\}$.
The \emph{degree} of node $v_{i}$ is $d_{i}:=|\mathcal{N}_{i}|$.
A \emph{weighted graph} is defined as the triad $\mathcal{H}:=(\mathcal{V},\mathcal{E},\mathcal{W})$,
where $\mathcal{W}$ is a weight function $\mathcal{W}:\mathcal{E}\to\mathbb{R}_{+}$
that assigns a real positive weight to each edge in $\mathcal{E}$.
We define the \emph{weight coefficient} as $w_{ij}:=\mathcal{W}(\{v_{i},v_{j}\})$
if $\{v_{i},v_{j}\}\in\mathcal{E}$, and $w_{ij}=0$ otherwise. The
weighted degree of node $v_{i}$ in a weighted graphs is defined as
$d_{i}^{\mathcal{H}}:=\sum_{j=1}^{n}w_{ij}$.

A \emph{walk} of length $k$ from node $v_{i_{0}}$ to $v_{i_{k}}$
is defined as an ordered sequence of vertices $p:=(v_{i_{0}},v_{i_{1}},\ldots,v_{i_{k}})$,
where $\{v_{i_{l}},v_{i_{l+1}}\}\in\mathcal{E}$, $l=0,1,\ldots,k-1$.
If $v_{i_{0}}=v_{i_{k}}$, the walk is said to be \emph{closed}.
Given a walk $p=(v_{i_{0}},v_{i_{1}},\ldots,v_{i_{k}})$
in a weighted graph $\mathcal{H}$, its weight is defined as the product
of the edge weights, $\omega_{\mathcal{H}}(p):=w_{i_{0}i_{1}}w_{i_{1}i_{2}}....w_{i_{k-1}i_{k}}$.
The \emph{distance} $\delta_{ij}$ between nodes $v_{i}$ and $v_{j}$
is defined as the minimum number of hops from $v_{i}$ to $v_{j}$.

The \emph{adjacency matrix} of an unweighted network $\mathcal{G}$
is defined as the $n\times n$ Boolean symmetric matrix $A_{\mathcal{G}}:=[a_{ij}]$,
defined entry-wise as $a_{ij}=1$ if $v_{i}$ is connected to $v_{j}$,
and $a_{ij}=0$ otherwise.\emph{ }The adjacency matrix of a weighted
graph $\mathcal{H}$ is defined as the symmetric matrix $W_{\mathcal{H}}:=[w_{ij}]$,
where $w_{ij}$ are the weight coefficients. The\emph{ degree matrix}
of a weighted graph $\mathcal{H}$ is the diagonal matrix of its weighted
degrees, i.e., $D_{\mathcal{H}}=\mbox{diag}(d_{i}^{\mathcal{H}})$.
The \emph{normalized Laplacian matrix} of a weighted graph $\mathcal{H}$
is defined as
\begin{equation}
L_{\mathcal{H}}:=I-D_{\mathcal{H}}{}^{-1/2}W_{\mathcal{H}}D_{\mathcal{H}}{}^{-1/2}.\label{eq:Laplacian_Matrix}
\end{equation}
The normalized Laplacian $L_{\mathcal{H}}$ is a symmetric, positive
semidefinite matrix \cite{chung1997spectral}. Thus, it has $n$ nonnegative eigenvalues $\lambda_{1}\geq\lambda_{2}\geq\ldots\geq\lambda_{n}\geq0$
and a full set of orthogonal eigenvectors $v_{1},\ldots,v_{n}$. The
largest eigenvalue $\lambda_{1}$ is called the \emph{spectral radius}
of $L_{\mathcal{H}}$, which satisfies $\lambda_{1}\leq2$, \cite{chung1997spectral}. Given a $n\times n$
symmetric matrix $M$ with (real) eigenvalues $\nu_{1},\ldots,\nu_{n}$,
we define the \emph{$k$-th spectral moment} of $M$ as
\begin{equation}
m_{k}(M):=\frac{1}{n}\sum\limits _{i=1}^{n}\nu_{i}^{k}.
\end{equation}

We now provide graph-theoretical elements to characterize the information
contained in local subgraphs of the network. Given a weighted
graph $\mathcal{H}$, we define the $r$-th order neighborhood around
node $v_{i}$ as the subgraph $\mathcal{H}_{i,r}=(\mathcal{N}_{i,r},\mathcal{E}_{i,r},\mathcal{W})$
with node-set $\mathcal{N}_{i,r}:=\left\{ v_{j}\in\mathcal{V}:\delta_{i,j}\leq r\right\} $
and edge-set $\mathcal{E}_{i,r}:=\{\left\{ v,w\right\} \in\mathcal{E}$
s.t. $v,w\in\mathcal{N}_{i,r}\}$. Notice that $\mathcal{H}_{i,r}$
is the subgraph of $\mathcal{H}$ induced%
\footnote{An induced subgraph is a subset of the vertices of a graph $\mathcal{G}$
together with any edges whose endpoints are both in this subset.%
} by the set of nodes that are at a distance $r$ or less from $v_{i}$.
This set of nodes can be found using a BFS of radius $r$ starting
at node $v_{i}$. Motivated by this interpretation, we call $\mathcal{H}_{i,r}$
the \emph{egonet} of radius $r$ around node $v_{i}$. Egonets can
be algebraically represented via submatrices of the weighted adjacency
matrix $W_{\mathcal{H}}$, as follows. Given a set of $k$ nodes $\mathcal{K}\subseteq\mathcal{V}$,
we denote by $W_{\mathcal{H}}\left(\mathcal{K}\right)$ the $k\times k$
submatrix of $W_{\mathcal{H}}$ formed by selecting the rows and columns
of $W_{\mathcal{H}}$ indexed by $\mathcal{K}$. In particular, given
the egonet $\mathcal{H}_{i,r}$, we define its adjacency matrix as
$W_{i,r}(\mathcal{H}):=W_{\mathcal{H}}\left(\mathcal{N}_{i,r}\right)$. By convention,
we associate the first row and column of the submatrix $W_{i,r}$
with node $v_{i}\in\mathcal{V}$ (the `center' of the egonet), which
can be done via a simple permutation of $W_{i,r}$.

\subsection{Problem Statement.}

The Laplacian eigenvalues of a graph can be efficiently computed for
graphs of small and medium size. In graphs of large size, this computation
is much more challenging. Furthermore, in many real-world networks,
one cannot retrieve the complete network structure due to, for example, privacy and/or security constrains  (e.g. Facebook). Alternatively,
it is usually easy to retrieve local samples of the graph structure in the
form of egonets. For example, one can acquire information about the network structure by extracting egonets of radius $r$ around a random sample of nodes. Therefore, it is realistic to assume that one does not have access to the complete topology of a large-scale network; instead, one can access only a (relatively small) number of egonets in the network.

Clearly, egonets do not completely describe the network structure;
thus, it is impossible to compute exactly the graph spectrum from local egonets.
In this paper, we show that, despite this limitation, we are able
to compute many spectral graph properties from the egonets.
 We show that given a (sufficiently large) random collection
of egonets of radius $r$, one can efficiently estimate the spectral
moments of the normalized Laplacian matrix, $m_{k}(L_{\mathcal{H}})$, for $k\leq2r+1$.
Furthermore, we show that, given a truncated sequence of spectral moments, one can derive bounds on relevant spectral
properties, such as the spectral radius of the Laplacian. As part of our analysis, we provide quality guarantees
for all the estimators and bounds herein proposed.

\section{Spectral Moments for Random Egonets}
We start our analysis assuming the (unrealistic) situation in which one can access all the egonets in
the network. Under this assumption, we shall derive expressions for the spectral moments
of the normalized Laplacian matrix. Afterwards, we shall relax our assumptions and
consider the more realistic case in which one can only access a (relatively small) number of random egonets. In this case, we propose estimators for the spectral moments
and analyze their quality using Hoeffding inequality.

\subsection{Spectral Moments as Averages.}

We derive an expression for the $k$-th spectral moments of the normalized Laplacian
matrix of a weighted graph, $L_{\mathcal{H}}$, from local egonets
of radius $r$, $\mathcal{H}_{i,r}$. In our derivations, we use the
following lemma from algebraic graph theory \cite{biggs1993algebraic}:

\begin{lemma}\label{WalkWeight} Let $\mathcal{H}$ be an undirected,
weighted graph with adjacency matrix $W_{\mathcal{H}}=[w_{ij}]$,
then
\begin{equation}
\left[W_{\mathcal{H}}^{k}\right]_{ii}=\sum_{p\in P_{i,k}}\omega_{\mathcal{H}}(p),
\end{equation}
where $\left[W_{\mathcal{H}}^{k}\right]_{ii}$ is the $\left(i,i\right)$-th
entry of the $k$-th power of the adjacency $W_{\mathcal{H}}$ and
$P_{i,k}$ is the set of all closed walks of length $k$ starting
and finishing at node $v_{i}$.\end{lemma}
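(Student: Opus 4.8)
The plan is to prove the identity by expanding the matrix power entrywise and then matching the resulting terms with closed walks. First I would establish, by induction on $k$, the elementary expansion
\[
\left[W_{\mathcal{H}}^{k}\right]_{ij}=\sum_{i_{1},\ldots,i_{k-1}=1}^{n} w_{i i_{1}}\, w_{i_{1} i_{2}}\cdots w_{i_{k-1} j},
\]
where the sum ranges over all ordered $(k-1)$-tuples of vertex indices (with the convention that for $k=1$ there is no summation and the right-hand side is just $w_{ij}$). The base case is the definition of $W_{\mathcal{H}}$, and the inductive step follows from writing $W_{\mathcal{H}}^{k}=W_{\mathcal{H}}^{k-1}W_{\mathcal{H}}$ and applying the definition of matrix multiplication, which introduces exactly one additional summation index.

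Next I would specialize to $j=i$, so that every summand is indexed by a sequence $(v_{i}=v_{i_{0}},v_{i_{1}},\ldots,v_{i_{k-1}},v_{i_{k}}=v_{i})$ of $k+1$ vertices beginning and ending at $v_{i}$. The key observation is that, since $w_{ab}=0$ whenever $\{v_{a},v_{b}\}\notin\mathcal{E}$, a summand $w_{i_{0}i_{1}}w_{i_{1}i_{2}}\cdots w_{i_{k-1}i_{k}}$ is nonzero precisely when every consecutive pair $\{v_{i_{l}},v_{i_{l+1}}\}$ is an edge of $\mathcal{H}$, i.e.\ precisely when the sequence $p=(v_{i_{0}},v_{i_{1}},\ldots,v_{i_{k}})$ is a closed walk of length $k$ at $v_{i}$. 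For such a walk, the summand equals exactly its weight $\omega_{\mathcal{H}}(p)$ by definition. Conversely, each closed walk in $P_{i,k}$ arises from exactly one index sequence appearing in the sum, so the map from nonzero terms to closed walks is a bijection. Discarding the vanishing terms therefore gives $\left[W_{\mathcal{H}}^{k}\right]_{ii}=\sum_{p\in P_{i,k}}\omega_{\mathcal{H}}(p)$, which is the claim.

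There is no serious obstacle in this argument; the only points requiring care are the bookkeeping in the induction (tracking where the extra index is inserted) and verifying that the correspondence between nonvanishing index sequences and elements of $P_{i,k}$ is genuinely a bijection, so that no walk is omitted and none double-counted. I would also remark that the symmetry of $W_{\mathcal{H}}$ plays no role in this particular identity, although it is what guarantees the symmetric structure of $W_{\mathcal{H}}^{k}$ used in the subsequent spectral-moment computations.
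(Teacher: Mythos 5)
Your proof is correct, and it is the standard argument: the paper itself does not prove this lemma but cites it from Biggs' \emph{Algebraic Graph Theory}, where exactly this expansion of $\left[W_{\mathcal{H}}^{k}\right]_{ij}$ as a sum over index sequences, followed by the identification of the nonvanishing terms with walks, is the proof. Your closing remark that symmetry of $W_{\mathcal{H}}$ is irrelevant here is also accurate; the only point worth flagging is that when the paper later applies the lemma to the Laplacian graph $\mathcal{L}(\mathcal{H})$, which carries self-loops, the correspondence between index sequences and walks must allow consecutive repeated vertices, but that is a feature of the application rather than a gap in your argument.
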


Using Lemma \ref{WalkWeight}, we can compute the spectral moments
of the weighted adjacency matrix $W_{\mathcal{H}}$, as follows:

\begin{theorem}\label{Moments from Subgraphs}Consider a weighted,
undirected graph $\mathcal{H}$ with adjacency matrix $W_{\mathcal{H}}$.
Let $W_{i,r}(\mathcal{H})$ be the (weighted) adjacency matrix of the egonet of
radius $r$ around node $v_{i}$, $\mathcal{H}_{i,r}$. Then, the
spectral moments of $W_{\mathcal{H}}$ can be written as
\begin{equation}
m_{k}\left(W_{\mathcal{H}}\right)=\frac{1}{n}\sum_{i=1}^{n}\left[W_{i,r}^{k}\left(\mathcal{H}\right)\right]_{11},\label{Moments as Sum of Traces}
\end{equation}
for $k\leq2r+1$.\end{theorem}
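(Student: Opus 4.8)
The plan is to reduce the identity to a purely combinatorial statement about short closed walks, using the trace formula for spectral moments together with Lemma~\ref{WalkWeight}. First I would write $m_k(W_{\mathcal{H}}) = \frac{1}{n}\sum_{i=1}^{n}\bigl[W_{\mathcal{H}}^{k}\bigr]_{ii}$: indeed the eigenvalues of $W_{\mathcal{H}}^{k}$ are $\nu_1^k,\ldots,\nu_n^k$, so $\sum_i \nu_i^k = \mathrm{tr}\,W_{\mathcal{H}}^{k} = \sum_i \bigl[W_{\mathcal{H}}^{k}\bigr]_{ii}$, and dividing by $n$ gives the moment. By Lemma~\ref{WalkWeight}, $\bigl[W_{\mathcal{H}}^{k}\bigr]_{ii} = \sum_{p\in P_{i,k}}\omega_{\mathcal{H}}(p)$, the weighted count of closed walks of length $k$ based at $v_i$ in $\mathcal{H}$. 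Applying the same lemma inside the egonet $\mathcal{H}_{i,r}$ (where, by the stated convention, $v_i$ is the node indexed $1$) gives $\bigl[W_{i,r}^{k}(\mathcal{H})\bigr]_{11} = \sum_{q}\omega_{\mathcal{H}_{i,r}}(q)$, the analogous sum over length-$k$ closed walks based at $v_i$ that remain inside $\mathcal{H}_{i,r}$. Hence it suffices to prove these two weighted sums coincide for every $i$ whenever $k\le 2r+1$.

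The core of the proof is the claim that every closed walk of length $k\le 2r+1$ based at $v_i$ lies entirely within the egonet $\mathcal{H}_{i,r}$. If $p=(v_{i_0},v_{i_1},\ldots,v_{i_k})$ with $v_{i_0}=v_{i_k}=v_i$, then the vertex $v_{i_\ell}$ is reachable from $v_i$ in $\ell$ hops along the initial segment of $p$ and in $k-\ell$ hops along the reversal of the final segment; therefore $\delta_{i,i_\ell}\le\min(\ell,k-\ell)\le\lfloor k/2\rfloor\le r$, so $v_{i_\ell}\in\mathcal{N}_{i,r}$. Consequently each edge $\{v_{i_\ell},v_{i_{\ell+1}}\}$ traversed by $p$ has both endpoints in $\mathcal{N}_{i,r}$, hence belongs to the induced subgraph $\mathcal{H}_{i,r}$, and it carries exactly the same weight coefficient there. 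This yields a set-equality between $P_{i,k}$ (closed walks in $\mathcal{H}$) and the set of length-$k$ closed walks based at $v_i$ in $\mathcal{H}_{i,r}$ — the reverse inclusion being trivial, since any walk inside the subgraph is a walk in $\mathcal{H}$ — and moreover $\omega_{\mathcal{H}}(p)=\omega_{\mathcal{H}_{i,r}}(p)$ for each such walk, because the walk weight is a product of edge weights that are inherited verbatim by the induced subgraph.

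Combining the two steps gives $\bigl[W_{\mathcal{H}}^{k}\bigr]_{ii} = \bigl[W_{i,r}^{k}(\mathcal{H})\bigr]_{11}$ for all $i$ and all $k\le 2r+1$; summing over $i$ and dividing by $n$ produces (\ref{Moments as Sum of Traces}). I expect the only delicate point to be the distance estimate $\delta_{i,i_\ell}\le\lfloor k/2\rfloor$ and its tight interaction with the threshold (note $\lfloor(2r+1)/2\rfloor=r$, while a walk of length $2r+2$ can genuinely escape to distance $r+1$), together with being explicit that ``induced subgraph'' means edge weights are copied unchanged, so the walk-weight products on the two sides are literally equal rather than merely matched by a bijection. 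The remainder is routine bookkeeping with the trace identity and Lemma~\ref{WalkWeight}.
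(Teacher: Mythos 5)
Your proposal is correct and follows essentially the same route as the paper's proof: the trace identity, Lemma~\ref{WalkWeight}, and the observation that a closed walk of length $k\le 2r+1$ based at $v_i$ cannot leave the radius-$\lfloor k/2\rfloor\le r$ neighborhood, so the walk sums in $\mathcal{H}$ and in the induced egonet coincide. Your version is somewhat more explicit than the paper's (the two-sided distance bound $\delta_{i,i_\ell}\le\min(\ell,k-\ell)$ and the remark that induced subgraphs inherit edge weights verbatim), but the argument is the same.
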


\medskip{}

\begin{proof}Since the trace of a matrix is the sum of its eigenvalues,
we can expand the $k$-th spectral moment of the adjacency matrix
as follows:
\begin{equation}
m_{k}\left(W_{\mathcal{H}}\right)=\frac{1}{n}\text{Trace}\left(W_{\mathcal{H}}^{k}\right)=\frac{1}{n}\sum_{i=1}^{n}\left[W_{\mathcal{H}}^{k}\right]_{ii}.\label{Moments as Walks in Laplacian Graph}
\end{equation}
From Lemma \ref{WalkWeight}, we have that $\left[W_{\mathcal{H}}^{k}\right]_{ii}=\sum_{p\in P_{k,i}}\omega_{\mathcal{H}}(p)$.
Notice that for a fixed value of $k$, closed walks of length $k$
in $\mathcal{H}$ starting at node $v_{i}$ can only touch nodes within
a certain distance $r\left(k\right)$ of $v_{i}$, where $r\left(k\right)$
is a function of $k$. In particular, for $k$ even (resp. odd), a
closed walk of length $k$ starting at node $i$ can only touch nodes
at most $k/2$ (resp. $\left\lfloor k/2\right\rfloor $) hops away
from $v_i$. Therefore, closed walks of length $k$ starting
at $v_i$ are always contained within the neighborhood of radius $\left\lfloor k/2\right\rfloor $.
In other words, the egonet $\mathcal{H}_{i,r}$ of radius $r$ contains
all closed walks of length up to $2r+1$ starting at node $v_i$. We
can count these walks by applying Lemma \ref{WalkWeight}
to the local adjacency matrix $W_{i,r}$. In particular, $\sum_{p\in P_{k,i}}\omega_{\mathcal{H}}(p)$
is equal to $\left[W_{i,r}^{k}\left(\mathcal{H}\right)\right]_{11}$
(since, by convention, node $1$ in the local egonet $\mathcal{H}_{i,r}$
corresponds to node $i$ in the graph $\mathcal{H}$). Therefore,
for $k\leq2r+1$, we have that
\begin{equation}
\left[W_{i,r}^{k}\left(\mathcal{H}\right)\right]_{11}=\sum_{p\in P_{k,i}}\omega_{\mathcal{H}}(p)=\left[W_{\mathcal{H}}^{k}\right]_{ii}.\label{Walks as Diagonal of Sublaplacian}
\end{equation}
Then, substituting (\ref{Walks as Diagonal of Sublaplacian}) into
(\ref{Moments as Walks in Laplacian Graph}), we obtain the statement
of our Theorem.\end{proof}

The above theorem allows us to compute a truncated sequence of spectral
moments $\left\{ m_{k}\left(W_{\mathcal{H}}\right)\text{, }k\leq2r+1\right\} $,
given all the egonets of radius $r$, $\left\{ \mathcal{H}_{i,r}\text{, }v_{i}\in\mathcal{V}\right\} $.
According to (\ref{Moments as Sum of Traces}), the $k$-th spectral
moment is simply the average of the quantities $\left[W_{i,r}^{k}\left(\mathcal{H}\right)\right]_{11}$,
$i=1,...,n$. For a fixed $k$, each value $\left[W_{i,r}^{k}\left(\mathcal{H}\right)\right]_{11}$,
$i=1,\dots,n$, can be computed in time $O\left(\left\vert \mathcal{N}_{i,r}\right\vert ^{3}\right)$,
where $\left\vert \mathcal{N}_{i,r}\right\vert $ is the number of
nodes in the local egonet $\mathcal{H}_{i,r}$. Notice that, if $\left\vert \mathcal{N}_{i,r}\right\vert =o\left(n\right)$,
we can compute the $k$-th spectral moments in linear time (with respect
to the size of the network) using (\ref{Moments as Sum of Traces}).

In what follows, we use the above results to compute the spectral
moments of the Laplacian matrix of a weighted graph, $L_{\mathcal{H}}$.
Before we present our results, we define the so-called Laplacian graph:

\begin{Definition}Given a weighted graph $\mathcal{H}$, we define
its Laplacian graph as $\mathcal{L}\left(\mathcal{H}\right):=\left(\mathcal{V},\mathcal{E\cup\widetilde{E}},\Pi\right)$,
where $\mathcal{\widetilde{E}}:=\left\{ \left\{ v,v\right\} :v\in\mathcal{V}\right\} $
(the set of all self-loops), and the weight function $\Pi:\mathcal{E\cup\widetilde{E}}\to\mathbb{R}$
is defined as:
\begin{equation}
\Pi\left(\left\{ v_{i},v_{j}\right\} \right):=\left\{ \begin{array}{ll}
1, & \text{for }v_{i}=v_{j},\\
\frac{-w_{ij}}{\sqrt{d_{i}^{\mathcal{H}}d_{j}^{\mathcal{H}}}}, & \text{for }\left\{ v_{i},v_{j}\right\} \in\mathcal{E}, v_{i}\neq v_{j}\\
0, & \text{otherwise,}
\end{array}\right.\label{eq:Laplacian_Graph}
\end{equation}
where $d_{i}^{\mathcal{H}}$ is the weighted degree of node $v_{i}$
in $\mathcal{H}$.\end{Definition}

Notice that the weighted adjacency matrix of the Laplacian graph $\mathcal{L}\left(\mathcal{H}\right)$,
denoted by $W_{\mathcal{L}\left(\mathcal{H}\right)}$, is equal to
the normalized Laplacian matrix of the weighted graph $\mathcal{H}$,
$L_{\mathcal{H}}$. Thus, $m_{k}(L_{\mathcal{H}})=m_{k}\left(W_{\mathcal{L}(\mathcal{H})}\right)$
and we can compute the spectral moments of the normalized Laplacian matrix using
weighted walks in the Laplacian graph.
In particular, the Laplacian
spectral moments satisfy
\begin{eqnarray}
m_{k}(L_{\mathcal{H}}) & = & m_{k}\left(W_{\mathcal{L}(\mathcal{H})}\right)\nonumber \\
 & = & \frac{1}{n}\sum_{i=1}^{n}\left[W_{\mathcal{L\left(H\right)}}^{k}\right]_{ii}\nonumber \\
 & = & \frac{1}{n}\sum_{i=1}^{n}\left[W_{i,r}^{k}\left(\mathcal{L\left(H\right)}\right)\right]_{11}\label{eq:Moments as local averages}
\end{eqnarray}
where $\left[W_{i,r}^{k}\left(\mathcal{L\left(H\right)}\right)\right]_{11}$
is the $\left(1,1\right)$-th entry of the $k$-th power of the weighted
adjacency matrix representing the egonet of radius $r$ around node
$v_{i}$ in the Laplacian graph $\mathcal{L\left(H\right)}$. Notice
that $\left[W_{i,r}^{k}\left(\mathcal{L\left(H\right)}\right)\right]_{11}$
is a real number that depends solely on the structure of the egonet
$W_{i,r}(\mathcal{H})$; thus, it is a variable that can be computed using local information about the structure of the network around node $v_{i}$.

In theory, if we had access to all the egonets in the graph, we could
calculate $W_{i,r}^{k}\left(\mathcal{L\left(H\right)}\right)$ for
all $v_{i}\in\mathcal{V}$, and compute the spectral moments of the
Laplacian matrix in linear time (under certain sparsity assumptions).
However, it is often impractical to traverse all the egonets of a
real-world large-scale network because of high computational cost. In the following subsection, we introduce
a method to approximate the spectral moments of a network in sublinear time from a random
collection of $s$ egonets and analyze the quality of our approximation
as a function of $s$.

\subsection{Sampling Egonets and Moment Estimation.\label{sub:Moment Estimators from Samples}}

Define the following `local' variable
\begin{equation}
\phi_{i,r}^{\left(k\right)}  :=  \left[W_{i,r}^{k}\left(\mathcal{L\left(H\right)}\right)\right]_{11}, k\leq 2r+1.\label{eq:Local Phi}
\end{equation}
Notice that $\phi_{i,r}^{\left(k\right)}$ is a function of the egonet
of radius $r$ around node $v_{i}$, since $W_{i,r}\left(\mathcal{L\left(H\right)}\right)$
is the weighted adjacency matrix of the egonet in the Laplacian graph.
Thus, $\phi_{i,r}^{\left(k\right)}$ is a local variable associated
to the $r$-th neighborhood around node $v_{i}$. According to (\ref{eq:Moments as local averages}), the $k$-th Laplacian spectral moment can be computed as the average,
\begin{equation}
m_{k}(L_{\mathcal{H}})  =  \frac{1}{n}\sum_{i=1}^{n}\phi_{i,r}^{\left(k\right)},\nonumber
\end{equation}
Let us now assume that we
do not have access to $\phi_{i,r}^{\left(k\right)}$, for all $v_{i}\in\mathcal{V}$;
instead, we only have access to $\phi_{i,r}^{\left(k\right)}$ for
$v_{i}\in\mathcal{S}$, where $\mathcal{S}\subset\mathcal{V}$ is a subset
of randomly sampled nodes. Since the spectral moment is a global average, we propose the following estimator of $m_{k}(L_{\mathcal{H}})$:
\begin{equation}
\widetilde{m}_{k}(L_{\mathcal{H}}):=\frac{1}{\left|\mathcal{S}\right|}\sum_{v_{i}\in\mathcal{S}}\phi_{i,r}^{\left(k\right)}.\label{eq:Estimator Moments}
\end{equation}
In what follows, we establish the quality of this estimator using
Chernoff-Hoeffding inequality.

\begin{lemma}\label{Lemma:Hoeffding} (Hoeffding Inequality) Let
$X_{1},X_{2},\ldots,X_{k}$ be independent random variables with $P(X_{i}\in[a,b])=1$
for $1\leq i\leq k$. Define the mean of these variables as $\overline{X}=\frac{1}{k}\sum_{i=1}^{k}X_{i}$,
then for any positive $t$, the following inequality holds
\begin{equation}
Pr\{\left|\overline{X}-\mathbb{E}(\overline{X})\right|\geq t\}\leq2\exp\left(\frac{-2kt^{2}}{(b-a)^{2}}\right),
\end{equation}
where $\mathbb{E}(\overline{X})$ is the expected value of $\overline{X}$.\end{lemma}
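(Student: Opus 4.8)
The plan is to prove this via the exponential moment (Chernoff) method, in four stages. First I would fix $\lambda>0$ and apply Markov's inequality to the nonnegative random variable $e^{\lambda(\overline{X}-\mathbb{E}\overline{X})}$, obtaining
\begin{equation}
\Pr\{\overline{X}-\mathbb{E}(\overline{X})\geq t\}\leq e^{-\lambda t}\,\mathbb{E}\!\left[e^{\lambda(\overline{X}-\mathbb{E}\overline{X})}\right].
\end{equation}
Writing $\overline{X}-\mathbb{E}\overline{X}=\frac{1}{k}\sum_{i=1}^{k}(X_i-\mathbb{E}X_i)$ and using independence of the $X_i$, the expectation on the right factorizes into $\prod_{i=1}^{k}\mathbb{E}\big[e^{(\lambda/k)(X_i-\mathbb{E}X_i)}\big]$, so the whole problem reduces to bounding the moment generating function of a single centered, bounded random variable.

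The technical heart of the argument is exactly that single-variable bound, usually called Hoeffding's lemma: if $Y$ is a random variable with $\mathbb{E}Y=0$ and $Y\in[a,b]$ almost surely, then $\mathbb{E}[e^{sY}]\leq e^{s^2(b-a)^2/8}$ for every real $s$. I would prove this by exploiting convexity of $x\mapsto e^{sx}$: for $x\in[a,b]$ one has $e^{sx}\leq \frac{b-x}{b-a}e^{sa}+\frac{x-a}{b-a}e^{sb}$; taking expectations and using $\mathbb{E}Y=0$ gives $\mathbb{E}[e^{sY}]\leq \frac{b}{b-a}e^{sa}-\frac{a}{b-a}e^{sb}=:e^{\varphi(s)}$ for an explicit function $\varphi$. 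A short computation shows $\varphi(0)=\varphi'(0)=0$ and $\varphi''(s)\leq (b-a)^2/4$ for all $s$ (the second derivative is the variance of a two-point distribution, which is at most a quarter of the squared range), so Taylor's theorem yields $\varphi(s)\leq s^2(b-a)^2/8$. This is the step I expect to be the main obstacle, since it requires the convexity trick plus the uniform bound on $\varphi''$; everything else is bookkeeping.

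With Hoeffding's lemma in hand, apply it to each $Y_i=X_i-\mathbb{E}X_i$ with $s=\lambda/k$ (note $Y_i$ ranges in an interval of width at most $b-a$), to get
\begin{equation}
\Pr\{\overline{X}-\mathbb{E}(\overline{X})\geq t\}\leq \exp\!\left(-\lambda t+\frac{\lambda^2(b-a)^2}{8k}\right).
\end{equation}
Finally I would optimize the exponent over $\lambda>0$; the minimizer is $\lambda=4kt/(b-a)^2$, which produces the one-sided tail bound $\exp(-2kt^2/(b-a)^2)$. The symmetric argument applied to $-X_i$ gives the same bound for the lower tail $\Pr\{\overline{X}-\mathbb{E}(\overline{X})\leq -t\}$, and a union bound over the two events contributes the factor $2$, yielding
\begin{equation}
\Pr\{\left|\overline{X}-\mathbb{E}(\overline{X})\right|\geq t\}\leq 2\exp\!\left(\frac{-2kt^2}{(b-a)^2}\right),
\end{equation}
which is the claimed inequality.
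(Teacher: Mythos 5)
Your proof is correct and complete: the Chernoff/Markov step, the factorization by independence, Hoeffding's lemma via convexity of $x\mapsto e^{sx}$ together with the bound $\varphi''(s)\leq(b-a)^{2}/4$, the optimization $\lambda=4kt/(b-a)^{2}$, and the union bound over the two tails all check out and yield exactly the stated constant. Note that the paper does not prove this lemma at all --- it is quoted as the classical Hoeffding inequality and used as a black box --- so there is no in-paper argument to compare against; your write-up is the standard textbook derivation and correctly identifies the single-variable moment-generating-function bound as the only nontrivial step.
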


In order to apply the above lemma, we need the following result:

\begin{lemma}\label{Lemma:Bounded phis}The variable $\phi_{i,r}^{\left(k\right)}$
satisfies $0\leq\phi_{i,r}^{\left(k\right)}\leq2^{k-1}$ for all $i\in\left[n\right]$,
$k\geq1$.\end{lemma}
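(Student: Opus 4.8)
The plan is to interpret $\phi_{i,r}^{(k)} = [W_{i,r}^k(\mathcal{L(H)})]_{11}$ combinatorially as a sum of weighted closed walks of length $k$ in the Laplacian graph starting and ending at $v_i$, using Lemma~\ref{WalkWeight} applied to the local adjacency matrix $W_{i,r}(\mathcal{L(H)})$. Thus $\phi_{i,r}^{(k)} = \sum_{p \in P_{i,k}} \omega_{\mathcal{L(H)}}(p)$, where now the walks live in the Laplacian graph, which has self-loops at every vertex with weight $1$ and edge weights $\Pi(\{v_i,v_j\}) = -w_{ij}/\sqrt{d_i^{\mathcal{H}} d_j^{\mathcal{H}}}$ on the original edges.

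For the lower bound $\phi_{i,r}^{(k)} \geq 0$, I would observe that $W_{i,r}(\mathcal{L(H)})$ is a principal submatrix of the normalized Laplacian $L_{\mathcal{H}} = W_{\mathcal{L(H)}}$, which is symmetric positive semidefinite. Since the egonet is induced (it contains \emph{all} edges among $\mathcal{N}_{i,r}$), the submatrix $W_{i,r}(\mathcal{L(H)})$ equals $L_{\mathcal{H}}(\mathcal{N}_{i,r})$, a principal submatrix of a PSD matrix, hence itself PSD. Therefore all diagonal entries of its $k$-th power are nonnegative (for $k$ even this is $\|e_1^\top W_{i,r}^{k/2}\|^2 \geq 0$; for $k$ odd, since $W_{i,r}$ is PSD it has a PSD square root $B$ and $[W_{i,r}^k]_{11} = [B W_{i,r}^{k-1} B]_{11} = \|W_{i,r}^{(k-1)/2} B e_1\|^2 \geq 0$). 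This gives $\phi_{i,r}^{(k)} \geq 0$.

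For the upper bound $\phi_{i,r}^{(k)} \leq 2^{k-1}$, I would bound the sum of weighted closed walks. The key point is that the normalized Laplacian has a controllable structure: writing $L_{\mathcal{H}} = I - P$ where $P = D_{\mathcal{H}}^{-1/2} W_{\mathcal{H}} D_{\mathcal{H}}^{-1/2}$ is similar to the random-walk transition matrix, each row of the $r$-th neighborhood submatrix has absolute row sum bounded: the diagonal entry contributes $1$, and the off-diagonal entries $|w_{ij}/\sqrt{d_i^{\mathcal{H}} d_j^{\mathcal{H}}}|$ over $j$ in the egonet sum to at most $\sum_j w_{ij}/\sqrt{d_i^{\mathcal{H}} d_j^{\mathcal{H}}}$, which need not be small in general. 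The cleaner route: $[W_{i,r}^k]_{11} = e_1^\top W_{i,r}^k e_1 \leq \|W_{i,r}^k\|_2 = \|W_{i,r}\|_2^k$ since $W_{i,r}$ is symmetric. Now $\|W_{i,r}\|_2 \leq \|L_{\mathcal{H}}\|_2 = \lambda_1 \leq 2$ because eigenvalue interlacing gives that the largest eigenvalue of a principal submatrix of the PSD matrix $L_{\mathcal{H}}$ is at most $\lambda_1(L_{\mathcal{H}}) \leq 2$. This yields $\phi_{i,r}^{(k)} \leq 2^k$, which is off by a factor of $2$. To recover the sharper $2^{k-1}$, I would use the spectral representation $\phi_{i,r}^{(k)} = \sum_{j} (\mu_j)^k \, c_j$ where $\mu_j \in [0,2]$ are the eigenvalues of $W_{i,r}$ and $c_j = (e_1^\top u_j)^2 \geq 0$ are coefficients summing to $\sum_j c_j = \|e_1\|^2 = 1$; then $\phi_{i,r}^{(k)}$ is a convex combination of the values $\mu_j^k \in [0, 2^k]$, and the diagonal entry of $W_{i,r}$ itself equals $[L_{\mathcal{H}}]_{ii} = 1$, so $\sum_j \mu_j c_j = 1$. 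Maximizing $\sum_j \mu_j^k c_j$ subject to $\mu_j \in [0,2]$, $c_j \geq 0$, $\sum c_j = 1$, $\sum \mu_j c_j = 1$ is a linear program whose optimum places mass on $\mu = 0$ and $\mu = 2$; the constraint $\sum \mu_j c_j = 1$ forces half the mass at $\mu = 2$, giving objective value $\tfrac12 \cdot 2^k = 2^{k-1}$.

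The main obstacle is the sharp constant: the naive operator-norm bound only gives $2^k$, so the argument genuinely needs the extra linear constraint coming from $[W_{i,r}]_{11} = [L_{\mathcal{H}}]_{ii} = 1$ (which holds because the Laplacian graph has a unit-weight self-loop at every node and the egonet preserves it). I would need to verify carefully that this diagonal-equals-one fact survives the restriction to the egonet and to justify the LP optimum, though both are routine once the setup is in place.
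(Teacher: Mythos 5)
Your proof is correct, and its core is the same as the paper's: a spectral decomposition, the unit-diagonal constraint $\sum_j \mu_j c_j = 1$, eigenvalues confined to $[0,2]$, and the chord inequality $\mu^k \le 2^{k-1}\mu$ on $[0,2]$ (which is what your LP optimum amounts to). The paper writes $[L_{\mathcal{H}}^k]_{ii}=\sum_j \lambda_j v_{j,i}^2\cdot\lambda_j^{k-1}\le 2^{k-1}\sum_j \lambda_j v_{j,i}^2=2^{k-1}$, using $[L_{\mathcal{H}}]_{ii}=1$. The one structural difference is which matrix you decompose: the paper works with the full Laplacian and implicitly invokes the identity $\phi_{i,r}^{(k)}=[L_{\mathcal{H}}^k]_{ii}$ from Theorem \ref{Moments from Subgraphs} (valid for $k\le 2r+1$, which is the only range where $\phi_{i,r}^{(k)}$ is defined anyway), whereas you work directly with the principal submatrix $W_{i,r}(\mathcal{L}(\mathcal{H}))$ and pay for it with Cauchy interlacing to get $\mu_j\in[0,2]$ and with the observation that an induced egonet yields a genuine principal submatrix. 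Your route is marginally more self-contained --- it avoids the walk-counting identity and bounds $[W_{i,r}^k]_{11}$ for every $k$, not just $k\le 2r+1$ --- while the paper's is shorter. The two points you flag as needing verification are indeed routine: the $(1,1)$ entry of the submatrix is literally $[L_{\mathcal{H}}]_{ii}=1$, and the LP optimum follows from convexity of $\mu\mapsto\mu^k$ on $[0,2]$. No gaps.
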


\begin{proof}Let $\lambda_{i}$ and $\mathbf{v}_{i}$ denote the
eigenvalues and eigenvectors of the Laplacian matrix $L_{\mathcal{H}}$,
for $i=1,2,\ldots,n$. Since $L_{\mathcal{H}}$ is symmetric is always diagonalizable and it has a complete set of orthonormal eigenvectors. Furthermore, $L_{\mathcal{H}}$ is also positive semidefinite; thus, its eigenvalues are nonnegative. Define the matrix
$V$ whose columns are the eigenvectors $\mathbf{v}_{i}$, and the
diagonal matrix $\Lambda=\mbox{diag}(\lambda_{1},\lambda_{2}...,\lambda_{n})$.
Then, $L_{\mathcal{H}}=V\Lambda V^{T}$, and
\begin{equation}
L_{\mathcal{H}}^{k}=V\Lambda^{k}V^{T}=\sum\limits _{j=1}^{n}\lambda_{j}^{k}\mathbf{v}_{j}\mathbf{v}_{j}^{T}.
\end{equation}
Denoting the $i$-th element of vector $\mathbf{v}_{j}$ as $v_{j,i}$,
we have
\begin{equation}
[L_{\mathcal{H}}^{k}]_{ii}=\sum\limits _{j=1}^{n}\lambda_{j}^{k}v_{j,i}^{2}.\label{decomposition-1}
\end{equation}
From (\ref{eq:Laplacian_Matrix}), we have that $[L_{\mathcal{H}}]_{ii}=1$.
Thus,
\begin{equation}
[L_{\mathcal{H}}]_{ii}=\sum\limits _{j=1}^{n}\lambda_{j}v_{j,i}^{2}=1.
\end{equation}
According to \cite{chung1997spectral}, the eigenvalues satisfy $0\leq\lambda_{i}\leq2$
for any $i\in\left[n\right]$. Thus,
\begin{eqnarray*}
[L_{\mathcal{H}}^{k}]_{ii} & = & \sum\limits _{j=1}^{n}\lambda_{j}v_{j,i}^{2}\cdot\lambda_{j}^{k-1}\\
 & \leq & \left(\sum\limits _{j=1}^{n}\lambda_{j}v_{j,i}^{2}\right)\lambda_{1}^{k-1}\\
 & \leq & \left(\sum\limits _{j=1}^{n}\lambda_{j}v_{j,i}^{2}\right)2^{k-1}\\
 & = & 2^{k-1},
\end{eqnarray*}
where we have used the fact that $2\geq\lambda_{1}\geq\lambda_{i}$,
for all $i$. Also, notice from (\ref{decomposition-1}), that every
element in the summation is nonnegative, then $[L_{\mathcal{H}}^{k}]_{ii}$
is nonnegative.\end{proof}

From Lemmas \ref{Lemma:Bounded phis} and \ref{Lemma:Hoeffding},
we obtain the following quality guarantee on our estimator:

\begin{theorem}\label{Thm:Moment Prob Inequality}Consider a set
$\mathcal{S}\subset\mathcal{V}$ of nodes chosen uniformly at random.
Then, the estimator $\widetilde{m}_{k}(L_{\mathcal{H}})$ for the
$k$-th Laplacian spectral moment defined in (\ref{eq:Estimator Moments})
satisfies
\[
\Pr\left\{ \left|\widetilde{m}_{k}(L_{\mathcal{H}})-m_{k}(L_{\mathcal{H}})\right|\geq t_{k}\right\} \leq2\exp\left(\frac{-8t_{k}^{2}\left|\mathcal{S}\right|}{4^{k}}\right).
\]

\end{theorem}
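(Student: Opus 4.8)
The plan is to recognize $\widetilde{m}_{k}(L_{\mathcal{H}})$ as an empirical average of i.i.d.\ bounded random variables and then invoke Hoeffding's inequality directly. First I would fix the sampling model: assume the $|\mathcal{S}|$ nodes are drawn uniformly and independently from $\mathcal{V}$, and for the $j$-th sampled node (with index $i$) set $X_{j}:=\phi_{i,r}^{(k)}$, the local variable from (\ref{eq:Local Phi}) (recall $k\leq 2r+1$). Since each sampled index is uniform on $\{1,\dots,n\}$, every $X_{j}$ has the same expectation
\[
\mathbb{E}[X_{j}]=\frac{1}{n}\sum_{i=1}^{n}\phi_{i,r}^{(k)}=m_{k}(L_{\mathcal{H}}),
\]
where the second equality is exactly (\ref{eq:Moments as local averages}). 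Hence $\widetilde{m}_{k}(L_{\mathcal{H}})=\overline{X}$ in the notation of Lemma \ref{Lemma:Hoeffding}, and $\mathbb{E}[\widetilde{m}_{k}(L_{\mathcal{H}})]=m_{k}(L_{\mathcal{H}})$, i.e.\ the estimator is unbiased.

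Next I would supply the range required by Hoeffding: by Lemma \ref{Lemma:Bounded phis}, each $X_{j}$ satisfies $0\leq X_{j}\leq 2^{k-1}$ almost surely, so we may take $a=0$, $b=2^{k-1}$, and therefore $(b-a)^{2}=4^{k-1}$. Applying Lemma \ref{Lemma:Hoeffding} with these parameters and with the number of variables equal to $|\mathcal{S}|$ gives
\[
\Pr\left\{\left|\widetilde{m}_{k}(L_{\mathcal{H}})-m_{k}(L_{\mathcal{H}})\right|\geq t_{k}\right\}\leq 2\exp\left(\frac{-2|\mathcal{S}|t_{k}^{2}}{4^{k-1}}\right)=2\exp\left(\frac{-8 t_{k}^{2}|\mathcal{S}|}{4^{k}}\right),
\]
which is precisely the claimed bound.

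Since the argument reduces to a substitution, there is no deep obstacle; the only point requiring care is the sampling model. If $\mathcal{S}$ is instead drawn \emph{without} replacement, so that the $X_{j}$ are no longer independent, I would appeal to Hoeffding's classical observation that an average of samples drawn without replacement from a finite population concentrates around the population mean at least as tightly as the corresponding with-replacement average, so the same tail bound still holds. I would also emphasize that the bound is uniform in the (unknown) true value $m_{k}(L_{\mathcal{H}})$, which is what makes it an \emph{a priori} sample-complexity guarantee: setting $2\exp(-8 t_{k}^{2}|\mathcal{S}|/4^{k})\leq\delta$ and solving for $|\mathcal{S}|$ yields the number of egonets needed to estimate $m_{k}(L_{\mathcal{H}})$ within additive error $t_{k}$ with confidence $1-\delta$.
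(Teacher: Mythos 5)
Your proof is correct and follows essentially the same route as the paper: substitute $\phi_{i,r}^{(k)}$ for the $X_i$ in Lemma \ref{Lemma:Hoeffding}, use Lemma \ref{Lemma:Bounded phis} to get the range $[0,2^{k-1}]$, and note that $(2^{k-1})^2 = 4^{k}/4$ yields the stated exponent. Your added remarks on unbiasedness of the estimator and on sampling without replacement (via Hoeffding's observation that the without-replacement average concentrates at least as tightly) are careful touches that the paper omits, but they do not change the argument.
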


\begin{proof}The proof is a direct application of Lemma \ref{Lemma:Hoeffding}
after substituting $\phi_{i,r}^{\left(k\right)}$ for $X_{i}$ and
$\left[0,2^{k-1}\right]$ for $\left[a,b\right]$. Using this result, we can calculate the number of samples $\left|\mathcal{S}\right|$
needed to achieve a particular error in our moment estimation with
a given probability. For each value of $k$, we denote by $s_{k}$
the sample size needed to achieve an error $t_{k}$ with a probability
less or equal to $\delta_{k}=2\exp\Big(\frac{-8t_{k}^{2}s_{k}}{4^{k}}\Big)$. Let us define the \emph{normalized
error} $\varepsilon_{k}:=\frac{t_{k}}{2^{k-1}}$, then taking $s_{k}=\frac{1}{2}\varepsilon_{k}^{-2}\ln\frac{2}{\delta_{k}}$
samples, we achieve an error $t_{k}$ with probability at most $\delta_{k}$,
i.e., $\Pr\left\{ \left|\widetilde{m}_{k}(L_{\mathcal{H}})-m_{k}(L_{\mathcal{H}})\right|\geq t_{k}\right\} \leq\delta_{k}$. \end{proof}

\section{Moment-Based Spectral Analysis.}

Using Theorem \ref{Moments from Subgraphs}, we can get a truncated
sequence of approximated spectral moments of the Laplacian matrix $L_{\mathcal{H}}$,
$\left\{ m_{k}(L_{\mathcal{H}})\right\} _{k\leq2r+1}$, from a set
of local egonets of radius $r$. We now present a convex optimization
framework to extract information about the largest eigenvalue of the weighted
adjacency matrix, $\lambda_{1}\left(L_{\mathcal{H}}\right)$, from
this sequence of moments.

\subsection{Moment-Based Spectral Bounds}

We can state the problem solved in this subsection as follows:

\textbf{Problem}. \label{Bound from Moments problem}Given a truncated
sequence of Laplacian spectral moments $\left\{ m_{k}(L_{\mathcal{H}})\right\} _{k\leq2r+1}$,
find tight upper and lower bounds on the largest eigenvalue $\lambda_{1}\left(L_{\mathcal{H}}\right)$.

\medskip{}

Our approach is based on a probabilistic interpretation of the eigenvalue
spectrum of a given network. To present our approach, we first need
to introduce some concepts:

\medskip{}

\begin{Definition} \label{Spectral density}Given a weighted, undirected
Laplacian matrix $L_{\mathcal{H}}$ with (real) eigenvalues $\lambda_{1},...,\lambda_{n}$,
the \emph{Laplacian spectral density} is defined as,
\begin{equation}
\mu_{L_{\mathcal{H}}}\left(x\right)\triangleq\frac{1}{n}\sum_{i=1}^{n}\delta\left(x-\lambda_{i}\right),\label{Spectral Measure}
\end{equation}
where $\delta\left(\cdot\right)$ is the Dirac delta function. \end{Definition}

\medskip{}

The spectral density can be interpreted as a discrete probability
density function with \emph{support}%
\footnote{Recall that the support of a finite Borel measure $\mu$ on $\mathbb{R}$,
denoted by $supp\left(\mu\right)$, is the smallest closed set $B$
such that $\mu\left(\mathbb{R}\backslash B\right)=0$.%
} on the set of eigenvalues $\left\{ \lambda_{i},\text{ }i=1...n\right\} $.
Let us consider a discrete random variable $X$ whose probability
density function is $\mu_{L_{\mathcal{H}}}$. The moments of this random
variable satisfy the following \cite{VictorTAC}:
\[
\mathbb{E}_{\mu_{L_{\mathcal{H}}}}\left(X^{k}\right)=m_{k}\left(L_{\mathcal{H}}\right),
\]
for all $k\geq0$.

\medskip{}

We now present a convex optimization framework that allows us to find
bounds on the endpoints of the smallest interval $\left[a,b\right]$
containing the support of a generic random variable $X\sim\mu$ given
a sequence of moments $\left(M_{0},M_{1},...,M_{2r+1}\right)$, where
$M_{k}\triangleq\int x^{k}dx$. Subsequently, we shall apply these
results to find bounds on $\lambda_{1}\left(L_{\mathcal{H}}\right)$.
Our formulation is based on the following matrices:

\medskip{}

\begin{Definition} \label{Hankel Moment Matrices}Given a sequence
of moments $\mathbf{M}_{2r+1}=\left(M_{0},M_{1},...,M_{2r+1}\right)$,
let $H_{2r}\left(\mathbf{M}_{2r+1}\right)$ and $H_{2r+1}\left(\mathbf{M}_{2r+1}\right)\in\mathbb{R}^{\left(r+1\right)\times\left(r+1\right)}$
be the Hankel matrices defined by%
\footnote{For simplicity in the notation, we shall omit the argument $\mathbf{M}_{2r+1}$
whenever clear from the context.%
}:
\begin{eqnarray}
\left[H_{2r}\right]_{ij}\triangleq M_{i+j-2},
&\left[H_{2r+1}\right]_{ij}\triangleq M_{i+j-1}\label{eq:Moment Matrices2}.
\end{eqnarray}
The above matrices are called the \emph{moment matrices} associated
with the sequence $\mathbf{M}_{2r+1}$. \end{Definition}

Given a truncated sequence of moments of a probability distribution,
we can compute a bound on its support as follows \cite{lasserre2011bounding}\cite{lasserre2009moments}:

\medskip{}

\begin{theorem} \label{Main Theorem for general densities}Let $\mu$
be a probability density function on $\mathbb{R}$ with associated
sequence of moments $\mathbf{M}_{2r+1}=\left(M_{0},M_{1},...,M_{2r+1}\right)$,
all finite, and let $\left[a,b\right]$ be the smallest interval which
contains the support of $\mu$. Then, $b\geq\beta^{\ast}\left(\mathbf{M}_{2r+1}\right)$,
where
\begin{equation}
\begin{array}{rrl}
\beta_{r}^{\ast}\left(\mathbf{M}_{2r+1}\right):= & \min_{x} & x\\
 & \text{s.t.} & H_{2r}\succeq0,\\
 &  & x~H_{2r}-H_{2r+1}\succeq0.
\end{array}\label{SDP generic bound}
\end{equation}
\end{theorem}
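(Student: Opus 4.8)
The plan is to establish the inequality $b \geq \beta_r^{\ast}(\mathbf{M}_{2r+1})$ by showing that the true right endpoint $b$ is always a \emph{feasible} point for the minimization program in (\ref{SDP generic bound}); since $\beta_r^{\ast}$ is the minimum over all feasible points, feasibility of $b$ immediately gives $b \geq \beta_r^{\ast}$. So the entire argument reduces to verifying the two semidefinite constraints $H_{2r} \succeq 0$ and $b\,H_{2r} - H_{2r+1} \succeq 0$ hold for the actual moment sequence of $\mu$.

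First I would recall the standard Hankel-matrix identity: for any vector $\mathbf{c} = (c_0, c_1, \dots, c_r)^T \in \mathbb{R}^{r+1}$, associate the polynomial $q(x) = \sum_{j=0}^{r} c_j x^j$. Then a direct expansion using $M_k = \int x^k\,d\mu$ gives
\[
\mathbf{c}^T H_{2r}\, \mathbf{c} = \sum_{i,j=0}^{r} c_i c_j \int x^{i+j}\,d\mu = \int q(x)^2 \,d\mu \geq 0,
\]
which proves $H_{2r} \succeq 0$ because $q(x)^2 \geq 0$ pointwise and $\mu$ is a positive measure. Similarly, using the shifted indexing in (\ref{eq:Moment Matrices2}) for $H_{2r+1}$,
\[
\mathbf{c}^T\!\left(b\,H_{2r} - H_{2r+1}\right)\!\mathbf{c} = \int (b - x)\, q(x)^2 \,d\mu .
\]
Here is the one place where the hypothesis that $[a,b]$ contains $\operatorname{supp}(\mu)$ enters: on the support we have $x \leq b$, so $(b-x)q(x)^2 \geq 0$ $\mu$-almost everywhere, hence the integral is nonnegative and $b\,H_{2r} - H_{2r+1} \succeq 0$. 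Therefore $x = b$ satisfies both constraints, so $b$ is feasible and $b \geq \beta_r^{\ast}(\mathbf{M}_{2r+1})$.

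The only genuinely delicate point — and the step I would be most careful about — is confirming that the index bookkeeping in Definition \ref{Hankel Moment Matrices} matches the two quadratic-form identities above, i.e.\ that $[H_{2r}]_{ij} = M_{i+j-2}$ pairs correctly with $\int q(x)^2 d\mu$ (degrees $0$ through $2r$) and that $[H_{2r+1}]_{ij} = M_{i+j-1}$ pairs with $\int x\,q(x)^2 d\mu$; a shift by one in the convention would break the argument. One should also note the minor technical caveat that when $\mu$ has an atom exactly at $b$ the term $(b-x)q(x)^2$ vanishes there, so nonnegativity is unaffected; and that finiteness of all moments $M_0,\dots,M_{2r+1}$ (part of the hypothesis) is what makes the Hankel entries well-defined. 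No compactness or optimization theory is needed for this direction — feasibility of the single point $x=b$ suffices.
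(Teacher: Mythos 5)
Your proof is correct. Note that the paper itself does not prove this theorem at all --- it imports it from the cited references of Lasserre on the moment problem --- so there is no in-paper argument to compare against; what you have written is the standard (and essentially the only) proof from that literature. Your reduction to feasibility of the point $x=b$ is exactly right: the quadratic-form identities $\mathbf{c}^{T}H_{2r}\mathbf{c}=\int q(x)^{2}\,d\mu$ and $\mathbf{c}^{T}\left(b\,H_{2r}-H_{2r+1}\right)\mathbf{c}=\int (b-x)\,q(x)^{2}\,d\mu$ identify $H_{2r}$ as the moment matrix and $b\,H_{2r}-H_{2r+1}$ as the localizing matrix of the polynomial $b-x$, and nonnegativity of the latter uses precisely that $\operatorname{supp}(\mu)\subseteq[a,b]$. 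Your index check is also right: with $i,j$ running over $1,\dots,r+1$ and $[H_{2r}]_{ij}=M_{i+j-2}$, the pairing is with $q(x)=\sum_{j=0}^{r}c_{j}x^{j}$, covering degrees $0$ through $2r$ in $H_{2r}$ and $1$ through $2r+1$ in $H_{2r+1}$, exactly the moments assumed finite. One cosmetic remark: you correctly read $M_{k}=\int x^{k}\,d\mu$ even though the paper's text has the typo $M_{k}\triangleq\int x^{k}\,dx$; and as you observe, only feasibility (not attainment) of $x=b$ is needed, so the inequality $b\geq\beta_{r}^{\ast}$ holds even if the program's infimum is not attained.
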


\medskip{}

Observe that, for a given sequence of moments $\mathbf{M}_{2r+1}$,
the entries of $xH_{2r}-H_{2r+1}$ depend affinely on the variable
$x$. Then $\beta^{\ast}\left(\mathbf{m}_{2r+1}\right)$ is the solutions
to a semidefinite program%
\footnote{A semidefinite program is a convex optimization problem that can be
solved in time polynomial in the input size of the problem; see e.g.
\cite{vandenberghe1996semidefinite}.%
} (SDP) in one variable. Hence, $\beta^{\ast}\left(\mathbf{M}_{2r+1}\right)$
can be efficiently computed using standard optimization software,
e.g. \cite{grant2008cvx}, from a truncated sequence of moments.

Applying Theorem \ref{Main Theorem for general densities} to the
spectral density $\mu_{L_{\mathcal{H}}}$ of a given graph $\mathcal{H}$
with spectral moments $\left(m_{0},m_{1},...,m_{2r+1}\right)$, we
can find a lower bound on its largest eigenvalue, $\lambda_{1}\left(L_{\mathcal{H}}\right)$,
as follows \cite{VictorTAC}:

\medskip{}

\begin{theorem} \label{Main Theorem for Spectral Densities}Let $L_{\mathcal{H}}$
be the normalized Laplacian matrix of a weighted, undirected graph with (real)
eigenvalues $\lambda_{1}\geq...\geq\lambda_{n}$. Then, given a truncated
sequence of the spectral moments of $L_{\mathcal{H}}$, $\mathbf{m}_{2r+1}=\left(m_{0},m_{1},\ldots,m_{2r+1}\right)$,
we have that
\begin{equation}
\lambda_{1}\left(L_{\mathcal{H}}\right)\geq\beta_{r}^{\ast}\left(\mathbf{m}_{2r+1}\right),\label{SDP spectral bound}
\end{equation}
where $\beta_{r}^{\ast}\left(\mathbf{m}_{2r+1}\right)$ is the solution
to the SDP in (\ref{SDP generic bound}). \end{theorem}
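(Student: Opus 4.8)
The plan is to reduce Theorem~\ref{Main Theorem for Spectral Densities} to Theorem~\ref{Main Theorem for general densities} by viewing the Laplacian spectral density $\mu_{L_{\mathcal{H}}}$ as a bona fide probability density on $\mathbb{R}$ and identifying the right endpoint of its support with the largest eigenvalue. First I would observe that, since $L_{\mathcal{H}}$ is symmetric with real eigenvalues $\lambda_{1}\geq\cdots\geq\lambda_{n}$, the spectral density $\mu_{L_{\mathcal{H}}}(x)=\frac{1}{n}\sum_{i=1}^{n}\delta(x-\lambda_{i})$ defined in Definition~\ref{Spectral density} is a finitely supported probability measure with $\mathrm{supp}(\mu_{L_{\mathcal{H}}})=\{\lambda_{1},\dots,\lambda_{n}\}$; hence the smallest closed interval $[a,b]$ containing its support has $b=\lambda_{1}(L_{\mathcal{H}})$ (and $a=\lambda_{n}(L_{\mathcal{H}})$).

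Next I would verify that the moments of $\mu_{L_{\mathcal{H}}}$ are exactly the spectral moments: as recalled in the paragraph following Definition~\ref{Spectral density}, for the discrete random variable $X\sim\mu_{L_{\mathcal{H}}}$ we have $\mathbb{E}_{\mu_{L_{\mathcal{H}}}}(X^{k})=\frac{1}{n}\sum_{i=1}^{n}\lambda_{i}^{k}=m_{k}(L_{\mathcal{H}})$ for all $k\geq0$, so in the notation of Theorem~\ref{Main Theorem for general densities} one has $M_{k}=m_{k}(L_{\mathcal{H}})$ for $k=0,1,\dots,2r+1$, and all these moments are finite because the support is a finite set. In particular $M_{0}=m_{0}(L_{\mathcal{H}})=1$, so $\mu_{L_{\mathcal{H}}}$ is normalized as required.

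With these identifications in place, I would simply invoke Theorem~\ref{Main Theorem for general densities} with $\mu=\mu_{L_{\mathcal{H}}}$ and $\mathbf{M}_{2r+1}=\mathbf{m}_{2r+1}=(m_{0},\dots,m_{2r+1})$: the theorem yields $b\geq\beta_{r}^{\ast}(\mathbf{M}_{2r+1})$, where $\beta_{r}^{\ast}$ is the optimal value of the SDP~\eqref{SDP generic bound} built from the Hankel moment matrices of Definition~\ref{Hankel Moment Matrices}. Substituting $b=\lambda_{1}(L_{\mathcal{H}})$ and $\mathbf{M}_{2r+1}=\mathbf{m}_{2r+1}$ gives precisely $\lambda_{1}(L_{\mathcal{H}})\geq\beta_{r}^{\ast}(\mathbf{m}_{2r+1})$, which is~\eqref{SDP spectral bound}.

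The argument is essentially a matter of correctly matching objects, so there is no serious analytic obstacle; the one point that deserves a sentence of care is the claim $\mathrm{supp}(\mu_{L_{\mathcal{H}}})=\{\lambda_{1},\dots,\lambda_{n}\}$ and hence $b=\lambda_{1}$ — one should note that a finite sum of Dirac masses has support equal to the (finite, hence closed) set of atoms, so the smallest closed interval containing it has right endpoint the maximal atom, namely $\lambda_{1}$. Everything else is a direct citation of Theorem~\ref{Main Theorem for general densities} together with the moment identity $m_{k}(L_{\mathcal{H}})=\mathbb{E}_{\mu_{L_{\mathcal{H}}}}(X^{k})$.
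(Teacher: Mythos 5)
Your proposal is correct and follows exactly the route the paper intends: the paper presents Theorem~\ref{Main Theorem for Spectral Densities} as a direct application of Theorem~\ref{Main Theorem for general densities} to the spectral density $\mu_{L_{\mathcal{H}}}$, whose support is the eigenvalue set (so the right endpoint $b$ of the smallest containing interval is $\lambda_{1}$) and whose moments are the spectral moments $m_{k}(L_{\mathcal{H}})$. Your added remark that a finite sum of Dirac masses has closed support equal to its set of atoms is the one detail the paper leaves implicit, and it is handled correctly.
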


\medskip{}

Using the optimization framework presented above, we can also compute
upper bounds on the spectral radius of $\mathcal{H}$ from a sequence
of its spectral moments, as follows. In this case, our formulation
is based on the following set of Hankel matrices:

\medskip{}

\begin{Definition} Given the Laplacian matrix of a weighted, undirected
graph $L_{\mathcal{H}}$ with $n$ nodes and spectral moments $\mathbf{m}_{2r+1}=\left(m_{0},m_{1},...,m_{2r+1}\right)$,
let $T_{2r}\left(y;\mathbf{m}_{2r+1},n\right)$ and $T_{2r+1}\left(y;\mathbf{m}_{2r+1},n\right)\in\mathbb{R}^{\left(r+1\right)\times\left(r+1\right)}$
be the Hankel matrices defined by%
\footnote{We shall omit the arguments from $T_{2r}$ and $T_{2r+1}$ whenever
clear from the context.%
}:
\begin{eqnarray}
\left[T_{2r}\right]_{ij} := & \frac{n}{n-1}m_{i+j-2}-\frac{1}{n-1}y^{i+j-2},\label{eq:Bulk Hankel Matrices1}\\
\left[T_{2r+1}\right]_{ij} := & \frac{n}{n-1}m_{i+j-1}-\frac{1}{n-1}y^{i+j-1}.\label{eq:Bulk Hankel Matrices2}
\end{eqnarray}
\end{Definition}

\medskip{}

Given a sequence of spectral moments, we can compute upper bounds
on the largest eigenvalue $\lambda_{1}\left(L_{\mathcal{H}}\right)$
using the following result \cite{VictorToN}\cite{VictorTAC}:

\medskip{}

\begin{theorem} \label{Upper Bound for Spectral Densities}Let $L_{\mathcal{H}}$
be the normalized Laplacian matrix of a weighted, undirected graph
with (real) eigenvalues $\lambda_{1}\geq...\geq\lambda_{n}$. Then,
given a truncated sequence of its Laplacian spectral moments $\mathbf{m}_{2r+1}=\left(m_{0},m_{1},...,m_{2r+1}\right)$,
we have that
\[
\lambda_{1}\leq\delta_{r}^{\ast}\left(\mathbf{m}_{2r+1},n\right),
\]
where
\begin{equation}
\begin{array}{rrl}
\delta_{r}^{\ast}\left(\mathbf{m}_{2r+1},n\right):= & \max_{y} & y\\
 & \text{s.t.} & T_{2r}\succeq0,\\
 &  & yT_{2r}-T_{2r+1}\succeq0,\\
 &  & T_{2r+1}\succeq0.
\end{array}\label{Upper Bound from Moments}
\end{equation}
\end{theorem}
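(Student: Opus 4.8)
The plan is to show that $y=\lambda_{1}$ is a \emph{feasible} point of the semidefinite program in (\ref{Upper Bound from Moments}); since that program maximizes $y$ over its feasible set, feasibility of $\lambda_{1}$ immediately yields $\lambda_{1}\le\delta_{r}^{\ast}\left(\mathbf{m}_{2r+1},n\right)$. (If the feasible set is unbounded above the bound is vacuous, and we assume $n\ge2$ so that $\frac{n}{n-1}$ and $\frac{1}{n-1}$ are well defined.) No SDP is actually solved; only feasibility of a single point is checked.

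The first step is to introduce the \emph{bulk} spectral measure, obtained from $\mu_{L_{\mathcal{H}}}$ by deleting one atom at the largest eigenvalue:
\begin{equation}
\nu:=\frac{1}{n-1}\sum_{i=2}^{n}\delta\!\left(x-\lambda_{i}\right).
\end{equation}
This is a probability measure, and since $L_{\mathcal{H}}$ is positive semidefinite (so $\lambda_{n}\ge0$) and $\lambda_{i}\le\lambda_{1}$ for all $i$, we have $\mathrm{supp}(\nu)\subseteq[0,\lambda_{1}]$. Using $\sum_{i=1}^{n}\lambda_{i}^{k}=n\,m_{k}$, its moments are
\begin{equation}
M_{k}(\nu)=\frac{1}{n-1}\sum_{i=2}^{n}\lambda_{i}^{k}=\frac{n\,m_{k}-\lambda_{1}^{k}}{n-1}=\frac{n}{n-1}m_{k}-\frac{1}{n-1}\lambda_{1}^{k}.
\end{equation}
Comparing with (\ref{eq:Bulk Hankel Matrices1})--(\ref{eq:Bulk Hankel Matrices2}), this identifies $T_{2r}(\lambda_{1};\mathbf{m}_{2r+1},n)$ with the Hankel moment matrix of $\nu$, namely $[T_{2r}(\lambda_{1})]_{ij}=M_{i+j-2}(\nu)$, and $T_{2r+1}(\lambda_{1};\mathbf{m}_{2r+1},n)$ with the shifted Hankel moment matrix, $[T_{2r+1}(\lambda_{1})]_{ij}=M_{i+j-1}(\nu)$.

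It then remains to invoke the classical positivity properties of the moment matrices of a measure supported on $[0,\lambda_{1}]$. For any $c\in\mathbb{R}^{r+1}$ write $q(x)=\sum_{\ell=1}^{r+1}c_{\ell}\,x^{\ell-1}$; then $c^{\top}T_{2r}(\lambda_{1})\,c=\int q(x)^{2}\,d\nu\ge0$, so $T_{2r}\succeq0$; $c^{\top}T_{2r+1}(\lambda_{1})\,c=\int x\,q(x)^{2}\,d\nu\ge0$ because $\mathrm{supp}(\nu)\subseteq[0,\infty)$, so $T_{2r+1}\succeq0$; and $c^{\top}\!\bigl(\lambda_{1}T_{2r}(\lambda_{1})-T_{2r+1}(\lambda_{1})\bigr)c=\int(\lambda_{1}-x)\,q(x)^{2}\,d\nu\ge0$ because $\lambda_{1}-x\ge0$ on $\mathrm{supp}(\nu)$, so $\lambda_{1}T_{2r}-T_{2r+1}\succeq0$. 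Hence all three constraints of (\ref{Upper Bound from Moments}) hold at $y=\lambda_{1}$, so $\lambda_{1}$ is feasible and $\lambda_{1}\le\delta_{r}^{\ast}$.

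The main obstacle will be the index bookkeeping in the second step: one must verify that the affine-in-$y$ entries of $T_{2r}$ and $T_{2r+1}$ collapse, under $y\mapsto\lambda_{1}$, exactly onto $M_{i+j-2}(\nu)$ and $M_{i+j-1}(\nu)$, so that the three SDP constraints line up precisely with ``moment matrix $\succeq0$'', ``localizing matrix for $x\ge0$ $\succeq0$'', and ``localizing matrix for $x\le\lambda_{1}$ $\succeq0$''. Everything else---the quadratic-form estimates above and the reduction of the bound to feasibility of one point---is routine. The degenerate case $\lambda_{1}=\lambda_{2}$ requires no special treatment, since $\nu$ is still supported in $[0,\lambda_{1}]$ and the same estimates apply verbatim.
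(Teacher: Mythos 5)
Your proof is correct: identifying $T_{2r}(\lambda_{1})$ and $T_{2r+1}(\lambda_{1})$ with the Hankel moment matrix and localizing matrices of the bulk measure $\nu=\frac{1}{n-1}\sum_{i=2}^{n}\delta(x-\lambda_{i})$ supported in $[0,\lambda_{1}]$ shows $y=\lambda_{1}$ is feasible, which immediately gives $\lambda_{1}\le\delta_{r}^{\ast}$. The paper itself offers no proof of this theorem (it only cites \cite{VictorToN} and \cite{VictorTAC}), and your argument is exactly the standard one behind those references, so there is nothing substantive to contrast.
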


The optimization program in (\ref{Upper Bound from Moments}) is not
an SDP, since the entries of the matrices $T_{2r}\left(y;\mathbf{m}_{2r+1},n\right)$
and $T_{2r+1}\left(y;\mathbf{m}_{2r+1},n\right)$\ are not affine
functions in $y$. Nevertheless, the program is clearly quasiconvex
\cite{boyd2004convex} and can be efficiently solved using a simple
bisection algorithms.

\medskip{}

In summary, using Theorems \ref{Moments from Subgraphs}, \ref{Main Theorem for Spectral Densities},
and \ref{Upper Bound for Spectral Densities}, we can compute upper
and lower bounds on the largest eigenvalue of the normalized Laplacian matrix
of a weighted, undirected network, $\lambda_{1}\left(L_{\mathcal{H}}\right)$,
from the set of local egonets with radius $r$, as follows: (\emph{Step
1}) Using (\ref{Moments as Sum of Traces}), compute the truncated
sequence of moments $\left\{ m_{k}(L_{\mathcal{H}})\right\} _{k\leq2r+1}$
from the egonets, and (\emph{Step 2}) using Theorems \ref{Main Theorem for Spectral Densities}
and \ref{Upper Bound for Spectral Densities}, compute the upper and
lower bounds, $\delta_{r}^{\ast}\left(\mathbf{m}_{2r+1},n\right)$
and $\beta_{r}^{\ast}\left(\mathbf{m}_{2r+1}\right)$, respectively.
However, the approach presented in this section is based on the assumption that we have
access to all the egonets in the network. In Subsection \ref{sub:Moment Estimators from Samples},
we have provided estimators of the spectral moments from a random sample
of egonets. In the following subsection, we will illustrate how to use
these estimators to derive upper and lower bounds on the Laplacian
spectral radius from a random sample of egonets.

\subsection{Bounds on Spectral Radius from Sampling Egonets}

From Theorem \ref{Thm:Moment Prob Inequality}, we have that the $k$-th
Laplacian spectral moment $m_{k}=m_{k}\left(L_{\mathcal{H}}\right)$
satisfies
\[
\Pr\left\{ m_{k}\in[\widetilde{m}_{k}-t_{k},\widetilde{m}_{k}+t_{k}]\right\} \leq2\exp\left(\frac{-8t_{k}^{2}\left|\mathcal{S}\right|}{4^{k}}\right),
\]
where $\widetilde{m}_{k}=\widetilde{m}_{k}(L_{\mathcal{H}})$ was
defined in (\ref{eq:Estimator Moments})\footnote{We shall omit the arguments $L_{\mathcal{H}}$ from $m_{k}$ and $\widetilde{m}_{k}$ unless there is need for specification.%
}. Then, the probability of
a truncated sequence of moments $\left(m_{k}\right)_{k\leq2r+1}$
satisfying $m_{k}\in[\widetilde{m}_{k}-t_{k},\widetilde{m}_{k}+t_{k}]$
for $k=2,\ldots,2r+1$, satisfies the following proposition (notice
that $m_{1}=1$, for any $L_{\mathcal{H}}$):

\begin{proposition}\label{Prop:Moments in Interval}
For a given $\Delta\in\left[0,1\right]$,
we have that
\[
\Pr\left(\bigcap_{k=2}^{2r+1}\left\{ m_{k}\in[\widetilde{m}_{k}-t_{k},\widetilde{m}_{k}+t_{k}]\right\} \right)\geq\Delta,
\]
if
\begin{equation}
t_{k}=\frac{2^{k-1}}{\sqrt{2\left|\mathcal{S}\right|}}\ln^{1/2}\frac{4r}{1-\Delta}.\label{eq:Definition tk}
\end{equation}
\end{proposition}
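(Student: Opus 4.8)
The plan is to combine the per-moment concentration bound from Theorem~\ref{Thm:Moment Prob Inequality} with a union bound over the $2r$ indices $k=2,3,\ldots,2r+1$. First I would recall that, by Theorem~\ref{Thm:Moment Prob Inequality}, for each fixed $k$ the estimator $\widetilde{m}_k$ satisfies
\[
\Pr\left\{\left|\widetilde{m}_k - m_k\right| \geq t_k\right\} \leq 2\exp\left(\frac{-8t_k^2\left|\mathcal{S}\right|}{4^k}\right),
\]
so the complementary ``good'' event $\{m_k\in[\widetilde{m}_k-t_k,\widetilde{m}_k+t_k]\}$ has probability at least $1 - 2\exp(-8t_k^2|\mathcal{S}|/4^k)$. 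Note that the index $k=1$ requires no control since $m_1(L_{\mathcal{H}}) = \frac{1}{n}\mathrm{Trace}(L_{\mathcal{H}}) = 1$ always, leaving exactly $2r$ indices to handle.

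The second step is the union bound: by subadditivity applied to the $2r$ complementary (``bad'') events,
\[
\Pr\left(\bigcap_{k=2}^{2r+1}\left\{m_k\in[\widetilde{m}_k-t_k,\widetilde{m}_k+t_k]\right\}\right) \geq 1 - \sum_{k=2}^{2r+1} 2\exp\left(\frac{-8t_k^2\left|\mathcal{S}\right|}{4^k}\right).
\]
It therefore suffices to choose each $t_k$ so that the $k$-th term of the sum is at most $(1-\Delta)/(2r)$, i.e., $2\exp(-8t_k^2|\mathcal{S}|/4^k) \leq (1-\Delta)/(2r)$. Solving this inequality for $t_k$ gives $8t_k^2|\mathcal{S}|/4^k \geq \ln\frac{4r}{1-\Delta}$, hence $t_k^2 \geq \frac{4^k}{8|\mathcal{S}|}\ln\frac{4r}{1-\Delta}$, and since $4^k/8 = 4^{k-1}/2 = (2^{k-1})^2/2$ this is exactly $t_k = \frac{2^{k-1}}{\sqrt{2|\mathcal{S}|}}\ln^{1/2}\frac{4r}{1-\Delta}$, matching~(\ref{eq:Definition tk}). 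With this choice the sum of bad-event probabilities is at most $2r \cdot \frac{1-\Delta}{2r} = 1-\Delta$, so the intersection has probability at least $\Delta$, as claimed.

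There is really no serious obstacle here; the only points requiring a moment's care are bookkeeping ones: confirming that the number of indices in the intersection is $2r$ (from $k=2$ to $k=2r+1$ inclusive) so that the union-bound factor is $2r$ rather than $2r+1$ or $2r-1$, and tracking the algebra $4^k/8 = (2^{k-1})^2/2$ so that the stated formula for $t_k$ comes out with the correct exponent. One should also note that the hypothesis $\Delta\in[0,1]$ ensures $1-\Delta \geq 0$ and, for the bound to be non-vacuous, $\ln\frac{4r}{1-\Delta}$ is well-defined and positive whenever $1-\Delta < 4r$, which holds trivially since $1-\Delta\leq 1 \leq 4r$. I would write the argument in exactly the three moves above: invoke Theorem~\ref{Thm:Moment Prob Inequality} per index, apply the union bound over $k=2,\ldots,2r+1$, and verify that the prescribed $t_k$ makes each term $\leq (1-\Delta)/(2r)$.
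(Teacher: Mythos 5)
Your proof is correct and follows essentially the same route as the paper's: a union bound over the $2r$ indices $k=2,\ldots,2r+1$ applied to the per-moment Hoeffding bound of Theorem~\ref{Thm:Moment Prob Inequality}, followed by choosing $t_k$ so that each bad-event probability equals $(1-\Delta)/(2r)$ (the paper phrases this as setting $t_k^2/4^k$ equal to a common constant $\alpha$, which is the same choice). Your algebra $4^k/8=(2^{k-1})^2/2$ correctly recovers the stated formula for $t_k$.
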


\begin{proof}First, we have that

\begin{eqnarray*}
 &  & \Pr\left(\bigcap_{k=2}^{2r+1}\left\{ m_{k}\in[\widetilde{m}_{k}-t_{k},\widetilde{m}_{k}+t_{k}]\right\} \right)\\
 & = & 1-\Pr\left(\bigcup_{k=2}^{2r+1}\left\{ m_{k}\notin[\widetilde{m}_{k}-t_{k},\widetilde{m}_{k}+t_{k}]\right\} \right)\\
 & \geq & 1-\sum_{k=2}^{2r+1}\Pr\left(m_{k}\notin[\widetilde{m}_{k}-t_{k},\widetilde{m}_{k}+t_{k}]\right)\\
 & \geq & 1-2\sum_{k=2}^{2r+1}\exp\left(\frac{-8t_{k}^{2}\left|\mathcal{S}\right|}{4^{k}}\right).
\end{eqnarray*}
The last probability can be made equal to a desired $\Delta$ by choosing
$t_{k}^{2}/4^{k}=\alpha$ for all $k=2,\ldots,2k+1$ with $\alpha$
satisfying $\Delta=1-4r\exp\left(-8\alpha\left|\mathcal{S}\right|\right)$.
Or equivalently, $\alpha=\frac{1}{8\left|\mathcal{S}\right|}\ln\frac{4r}{1-\Delta}$,
which implies the statement of the Proposition, after simple algebraic
manipulations.

\end{proof}

We can then apply the result in Theorem \ref{Main Theorem for Spectral Densities}
to compute a probabilistic lower bound on the spectral radius by solving
a modified version of the SDP in (\ref{SDP generic bound}), as follows.
First, given a sample set of nodes $\mathcal{S}$, we extract the
corresponding egonets of radius $r$. Then, using (\ref{eq:Estimator Moments})
and (\ref{eq:Local Phi}), we compute a sequence of estimators $\widetilde{m}_{k}$
for $k=2,\ldots,2r+1$. Finally, according to Proposition \ref{Prop:Moments in Interval},
we modify the SDP in (\ref{SDP generic bound}) to obtain our main
result.

\begin{theorem} Given a uniform sample set $\mathcal{S}\subset\mathcal{V}$
and the egonets of radius $r$ around the nodes in $\mathcal{S}$,
the spectral radius of the normalized Laplacian matrix satisfies
\[
\Pr\left(\lambda_{1}\left(L_{\mathcal{H}}\right)\geq\widetilde{\beta}_{r}\right)\geq\Delta,
\]
where
\[
\begin{array}{rrl}
\widetilde{\beta}_{r}:= & \min_{x} & x\\
 & \text{s.t.} & H_{2r}\succeq0,\\
 &  & x~H_{2r}-H_{2r+1}\succeq0,\\
 &  & \widetilde{m}_{k}-t_{k}\leq m_{k}\leq\widetilde{m}_{k}+t_{k},\\
 &  & k=2,\ldots,2r+1,
\end{array}
\]
with $m_{k}$ and $\widetilde{m}_{k}$ defined in (\ref{eq:Local Phi}) and (\ref{eq:Estimator Moments}), $H_{2r}$ and $H_{2r+1}$ defined in (\ref{eq:Moment Matrices2}),
and $t_{k}$ defined in (\ref{eq:Definition tk}). \end{theorem}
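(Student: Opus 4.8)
The plan is to combine Proposition~\ref{Prop:Moments in Interval} with Theorem~\ref{Main Theorem for Spectral Densities} via a straightforward conditioning argument. First I would observe that the feasible region of the optimization defining $\widetilde{\beta}_r$ is nonempty with high probability: the true moment sequence $\mathbf{m}_{2r+1}=(m_0,m_1,\ldots,m_{2r+1})$ always satisfies the two Hankel positivity constraints $H_{2r}\succeq0$ and $xH_{2r}-H_{2r+1}\succeq0$ (for $x=\lambda_1$) by Theorem~\ref{Main Theorem for Spectral Densities}, and it satisfies the box constraints $\widetilde{m}_k-t_k\le m_k\le \widetilde{m}_k+t_k$ precisely on the event $E:=\bigcap_{k=2}^{2r+1}\{m_k\in[\widetilde{m}_k-t_k,\widetilde{m}_k+t_k]\}$, whose probability is at least $\Delta$ by Proposition~\ref{Prop:Moments in Interval} with $t_k$ chosen as in \eqref{eq:Definition tk}.

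The key step is then the following implication: on the event $E$, the pair $(\mathbf{m}_{2r+1},\lambda_1(L_{\mathcal{H}}))$ is feasible for the program defining $\widetilde{\beta}_r$, where we treat $(m_2,\ldots,m_{2r+1})$ as the ``free'' moment variables (with $m_0=1$, $m_1=1$ fixed) and $x$ as the scalar variable. Indeed, with $x=\lambda_1$ we have $xH_{2r}-H_{2r+1}\succeq0$ because $\lambda_1 H_{2r}-H_{2r+1}$ equals the moment matrix of the measure $\sum_i(\lambda_1-\lambda_i)\,v_{j,i}^2\,\delta(\cdot)$-type localizing matrix, which is PSD since $\lambda_1\ge\lambda_i$ for all $i$ (this is exactly the content used to prove Theorem~\ref{Main Theorem for general densities}); and the box constraints hold by definition of $E$. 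Since $\widetilde{\beta}_r$ is the \emph{minimum} of $x$ over a region containing this feasible point, we get $\widetilde{\beta}_r\le\lambda_1(L_{\mathcal{H}})$ whenever $E$ occurs. Therefore $\{E\}\subseteq\{\lambda_1(L_{\mathcal{H}})\ge\widetilde{\beta}_r\}$, and taking probabilities gives $\Pr(\lambda_1(L_{\mathcal{H}})\ge\widetilde{\beta}_r)\ge\Pr(E)\ge\Delta$, which is the claim.

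I would write this up as: (i) recall that the event $E$ has probability at least $\Delta$ by Proposition~\ref{Prop:Moments in Interval}; (ii) argue deterministically that $E$ implies feasibility of $(\mathbf{m}_{2r+1},\lambda_1)$ in the optimization for $\widetilde\beta_r$, invoking Theorem~\ref{Main Theorem for Spectral Densities} / Theorem~\ref{Main Theorem for general densities} for the two SDP constraints and the definition of $t_k$ for the box constraints; (iii) conclude $\widetilde\beta_r\le\lambda_1$ on $E$ by minimality; (iv) take probabilities. One subtlety worth flagging explicitly is that the optimization for $\widetilde\beta_r$ is over \emph{both} $x$ and the moment vector $(m_k)_{k=2}^{2r+1}$ simultaneously, so the constraints $H_{2r},H_{2r+1}$ are to be read as functions of these moment variables; the point is simply that the true moments together with $x=\lambda_1$ form one admissible choice, so the optimal value cannot exceed $\lambda_1$.

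The main obstacle is mostly expository rather than mathematical: one must be careful that the program's variables include the moments (constrained to the Hoeffding boxes) and not just the scalar $x$, and that the Hankel/localizing-matrix positivity that underlies Theorem~\ref{Main Theorem for general densities} is applied to the \emph{true} spectral density $\mu_{L_{\mathcal{H}}}$, whose support lies in $[0,\lambda_1]$, so that $x=\lambda_1$ indeed satisfies $xH_{2r}-H_{2r+1}\succeq0$. Everything else is a union bound already carried out in Proposition~\ref{Prop:Moments in Interval} plus the elementary observation that enlarging a minimization's feasible set (here, to a random set that contains the true point with probability $\ge\Delta$) can only decrease the optimal value.
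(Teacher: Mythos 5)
Your proposal is correct and is exactly the argument the paper intends: the paper's own ``proof'' is the single remark that the result is a direct adaptation of Theorem~\ref{Main Theorem for Spectral Densities} via Proposition~\ref{Prop:Moments in Interval}, and your conditioning on the event $E=\bigcap_{k}\{m_k\in[\widetilde{m}_k-t_k,\widetilde{m}_k+t_k]\}$, showing that on $E$ the true moments together with $x=\lambda_1$ are feasible so that the minimum $\widetilde{\beta}_r$ cannot exceed $\lambda_1$, is precisely that adaptation spelled out. Your explicit flagging that the program optimizes over both $x$ and the boxed moment variables is a useful clarification the paper leaves implicit.
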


The proof of the above theorem is a direct adaptation of Theorem \ref{Main Theorem for Spectral Densities}
using Proposition \ref{Prop:Moments in Interval}. The same adaptation
can be applied to derive an upper bound on the spectral radius of
the normalized Laplacian matrix from Theorem \ref{Upper Bound for Spectral Densities}
and Proposition \ref{Prop:Moments in Interval}.

\section{Numerical Analysis}

In this section, we will present the numerical analysis of spectral
radius estimation, and verify the quality of estimation based on
sampled nodes. In our simulations, we will use data from the Euro-Email
network \cite{networkdata}. The network is composed of 36,692 nodes, which
are connected by 183,831 edges. Here we consider the network as unweighted,
undirected simple graph. Two nodes are connected as long as either
user sent email to another. From the simple graph, we can construct
the weighted Laplacian graph that corresponds to the simple graph.

To be able to compute the spectral radius, we extract a small network
with 5000 nodes via BFS, so that we could compare the performance
of sampling with the accurate computation. The subgraph with 5000
nodes will be the object of our analysis.


The nodes are assigned with indexes without consideration about their topology.
To get a uniform sampling, the indexes are picked randomly, which
compose the collection of sampled nodes.

In Figure \ref{SampleConfidenceInOne}, we take different number of
samples from the network to estimate the bounds of the spectral radius
based on egonet with radius $r_{0}=3$. In the simulation, the normalized
error bound for the elements in the moment sequence is fixed, i.e. $\frac{t_{k}}{2^{k-1}}=0.08$.
Here we assume that the moments for $k=1,2...5$ can be accurately
computed, because it does not cost much to compute the power of
the Laplacian matrix up to 5-th order. However, for the 6-th and 7-th moment, we take
uniform samples from the whole network and approximate the moment using the estimator proposed in the previous sections, i.e. using average of the sampled egonets to approximate the global average. Thus for each $k>5$,
$m_{k}\in[\widetilde{m_{k}}-t_{k},\widetilde{m_{k}}+t_{k}]$. With the size of the sampled nodes increasing, the quality (accuracy guarantee) of the estimator increases.
\begin{figure}[!hbt]
\centering \includegraphics[width=3.4in]{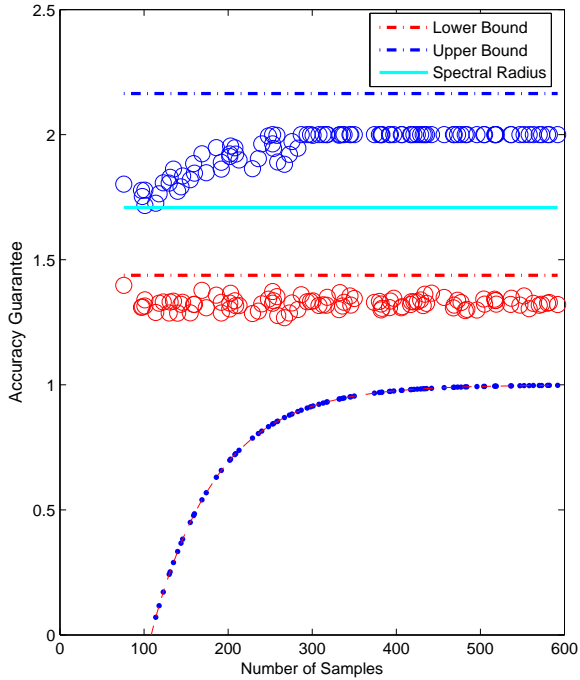}
\caption{Estimation when the size of the samples varies. (1) The normalized error for each moment are the same. Samples with different sizes give different estimations of the bounds. (2) Quality guarantee is a function of the size of the sample. When the number of sample increases, the quality of the estimator increases.}
\label{SampleConfidenceInOne}
\end{figure}


From the upper part of the figure, it can be seen that the lower bound does not change
much when the number of samples changes. For the upper bound, when
the number of samples increases, the bound gets looser, but the accuracy
guarantee that the spectral radius is within the bounds increase.
The dotted lines are the bounds calculated by considering the egonets
of every nodes in the network. And the circles are the estimated bounds when different sets of nodes are taken as samples. The lower part of Figure \ref{SampleConfidenceInOne}
gives the curve for the number of samples versus the accuracy guarantee. Though the network has 5000 nodes, taking 600 samples will
give the estimation with nearly 100 percent.

\begin{figure}[!hbt]
\centering \includegraphics[width=3.6in]{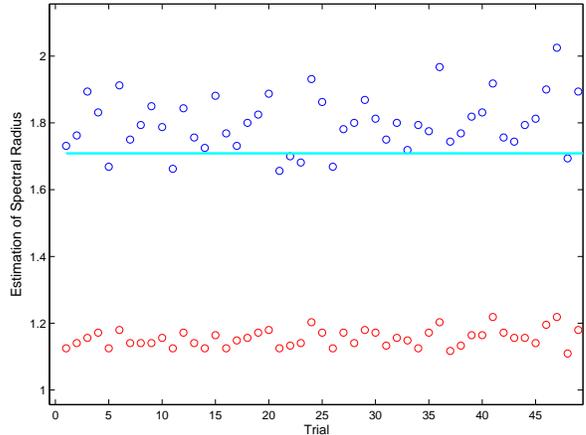}
\caption{Spectral bounds for trials with the same sample size $\mathcal{S}=21$. When the number
of samples are the same, }
\label{Estimation_Confidence4}
\end{figure}

\begin{figure}[!hbt]
\centering \includegraphics[width=3.6in]{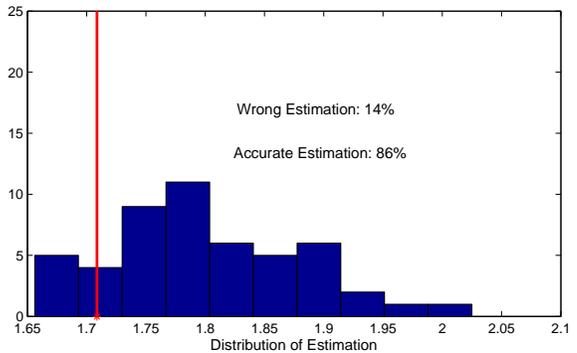}
\caption{Distribution of estimation when the size of the sample is the same.}
\label{Estimation_Confidence44}
\end{figure}

In Figure \ref{Estimation_Confidence4}, we take different samples
with the same sample size to verify the quality of estimation. The
normalized error is set to be $\varepsilon_{k}=0.2$, and $\delta_{k}=0.4$,
thus the sample size needed is $S=21$. From the figure, it can be seen
that the lower bound is much loose, and almost the same when the
sample pool are different. Checking whether the estimation range is
correct for each trial, we can see from Figure \ref{Estimation_Confidence44}
that the accuracy rate is $86\%$, which is much higher than the theoretical
accuracy probability $1-\delta_{k}=0.6$.


\section{Conclusion}

In this paper, we apply graph theories and convex optimization techniques
to study the spectrum property of the normalized Laplacian
matrix. Instead of analyzing the whole network, we focus on localized structural features with radius $r$.

Due to the high cost of traversing all the nodes, we have proposed to take uniform samples from
the network pool and use the sampled egonets to estimate the moments of the normalized Laplacian.
With Hoeffding inequalities, we characterize the quality of the estimators in terms of normalized error and size of the sample. In addition, we have derived the lower and upper bounds of the spectral radius by solving a series of SDP problems, based on the collection of random subgraphs. The combination of quality guarantee of moment sequence and the optimization problems provides us with the estimation guarantee of the spectral radius.

\bibliographystyle{siam}
\bibliography{SDMreference}

\end{document}